\pgfplotsset{compat=1.17}
\DeclareMathOperator{\tr}{tr}
\let\originalleft\left
\let\originalright\right
\renewcommand{\left}{\mathopen{}\mathclose\bgroup\originalleft}
\renewcommand{\right}{\aftergroup\egroup\originalright}
\newcommand{\bra}[1]{\left\langle #1 \right|}
\newcommand{\ket}[1]{\left| #1 \right\rangle}
\newcommand{\ketbra}[2]{\left|#1\middle\rangle\middle\langle#2\right|}
\newcommand{\proj}[1]{\left|#1\middle\rangle\middle\langle#1\right|}
\newcommand{\Proj}[1]{|#1\rangle\langle#1|}
\newcommand{\norm}[1]{\left\|#1\right\|}
\newcommand{\de}[1]{\left(#1\right)}
\newcommand{\De}[1]{\left[#1\right]}
\newcommand{\DE}[1]{\left\{#1\right\}}
\newcommand{\id}{\mathds{1}}
\DeclareMathOperator{\dint}{d\!}
\newtheorem{theorem}{Theorem}
\newtheorem*{theorem*}{Theorem}
\begin{document}
\title{Quantum key distribution rates from semidefinite programming}
\author{Mateus Araújo}\orcid{0000-0003-0155-354X}\affiliation{Institute for Quantum Optics and Quantum Information (IQOQI),
Austrian Academy of Sciences, Boltzmanngasse 3, 1090 Vienna, Austria}
\author{Marcus Huber}\affiliation{Institute for Quantum Optics and Quantum Information (IQOQI),
Austrian Academy of Sciences, Boltzmanngasse 3, 1090 Vienna, Austria}
\affiliation{Vienna Center for Quantum Science and Technology, Atominstitut, TU Wien, 1020 Vienna, Austria}
\author{Miguel Navascués}\affiliation{Institute for Quantum Optics and Quantum Information (IQOQI),
Austrian Academy of Sciences, Boltzmanngasse 3, 1090 Vienna, Austria}
\author{Matej Pivoluska}
\affiliation{Vienna Center for Quantum Science and Technology, Atominstitut, TU Wien, 1020 Vienna, Austria}
\affiliation{Institute of Computer Science, Masaryk University, 602 00 Brno, Czech Republic}
\affiliation{Institute of Physics, Slovak Academy of Sciences, 845 11 Bratislava, Slovakia}
\author{Armin Tavakoli}\affiliation{Institute for Quantum Optics and Quantum Information (IQOQI),
Austrian Academy of Sciences, Boltzmanngasse 3, 1090 Vienna, Austria}
\affiliation{Vienna Center for Quantum Science and Technology, Atominstitut, TU Wien, 1020 Vienna, Austria}
\date{22nd May 2023}

\begin{abstract}
Computing the key rate in quantum key distribution (QKD) protocols is a long standing challenge. Analytical methods are limited to a handful of protocols with highly symmetric measurement bases. Numerical methods can handle arbitrary measurement bases, but either use the min-entropy, which gives a loose lower bound to the von Neumann entropy, or rely on cumbersome dedicated algorithms. Based on a recently discovered semidefinite programming (SDP) hierarchy converging to the conditional von Neumann entropy, used for computing the asymptotic key rates in the device independent case, we introduce an SDP hierarchy that converges to the asymptotic secret key rate in the case of characterised devices. The resulting algorithm is efficient, easy to implement and easy to use. We illustrate its performance by recovering known bounds on the key rate and extending high-dimensional QKD protocols to previously intractable cases. We also use it to reanalyse experimental data to demonstrate how higher key rates can be achieved when the full statistics are taken into account.
\end{abstract}
\maketitle

\section{Introduction}

In the long history of QKD (see \cite{gisin2002, scarani2009,xu2020,pirandola2020} for reviews), secret key rates have mostly been calculated analytically. Consequently, we only know their values for protocols with highly symmetric measurement bases, such as BB84 \cite{bennett1984}, the six-state protocol \cite{bruss1998}, or their generalisations to mutually unbiased bases (MUBs) in higher dimensions \cite{cerf2002,sheridan2010}. In order to tackle arbitrary measurement bases, for a long time the only approach available was to use the min-entropy to lower bound the von Neumann entropy \cite{koenig2009}. This trick has been applied successfully to device independent QKD \cite{bancal2014,nieto-silleras2014} and QKD with characterised devices \cite{doda2021}. It leads to a simple SDP, at the cost of delivering suboptimal key rates. Numerical methods that go beyond the min-entropy approximation have been developed to tackle this issue \cite{wang2019,tan2021}; unfortunately, none of them output optimal key rates. In Refs.~\cite{winick2018,hu2022} numerical methods that do provide optimal key rates are proposed. We'll compare them with our method in the Conclusion.

Recently, Brown et al. \cite{brown2021} presented a new approach for the accurate computation of asymptotic secret key rates in device independent QKD. It is based on a converging SDP hierarchy of lower bounds on the conditional von Neumann entropy. This hierarchy relies, in turn, on the NPA hierarchy \cite{navascues2008} for characterising quantum correlations. As a consequence, the Brown et al. hierarchy has a very high complexity, being capable of handling only cases with few measurement settings.

In this work we adapt the SDP hierarchy proposed in Ref.~\cite{brown2021} to the case of QKD with characterised devices, enabling likewise the computation of a sequence of lower bounds converging to the asymptotic key rate. In contrast with the device independent case, we don't need to use the NPA hierarchy, which makes the method much more efficient. Remarkably, the complexity of this new method is essentially independent of the number of measurement settings. Instead, the most relevant parameter is the quantum state dimension, and we can handle cases up to local dimension $8$.

We illustrate the power of the technique on two families of protocols. First, for protocols that use $d+1$ mutually-unbiased bases (MUB protocols), we recover the analytical key rate in the known cases, and extend the protocol to use the full experimental data in any dimension (including dimensions that are not primes or prime powers, in which we use $d+1$ bases that are only approximately unbiased). Second, we propose and analyse a new protocol that is tailored to situations where one can produce high-dimensional entanglement, but can only measure superpositions of consecutive basis elements.

In order to handle real experimental data, we show how to modify the algorithm to minimise the conditional entropy over a confidence region in parameter space, and propose a new Monte Carlo method to compute such a confidence region via Bayesian parameter estimation. We illustrate the method by reanalysing data from two experiments.

The paper is organised as follows. In Section \ref{sec:brown} we recap necessary results from Ref.~\cite{brown2021}. In Section \ref{sec:mainsdp} we present and prove the main result, the SDP hierarchy for bounding QKD rates. In Section \ref{sec:numerics} we apply it to recover known bounds on the key rate and compute new ones. In Section \ref{sec:experimental} we show how to modify the SDP to handle experimental data, and use that to bound key rates from two previous experiments.


\section{Variational lower bounds for the conditional entropy}\label{sec:brown}

In this section, we review necessary results from Ref.~\cite{brown2021} and establish the notation to be used from now on.

In quantum key distribution, the asymptotic key rate is lowerbounded by the Devetak-Winter rate\footnote{This expression gives the rate per key round. In order to get the rate per round one needs in addition to multiply it by the sifting factor, or notice that the sifting factor can be made arbitrarily close to one in the asymptotic limit by choosing the key basis with probability arbitrarily close to one \cite{Lo2005}.} \cite{devetak2005}
\begin{equation}
K \ge H(A|E)-H(A|B),
\end{equation}
where $H(\cdot|\cdot)$ is the quantum conditional entropy. $H(A|B)$ only depends on the statistics measured by Alice and Bob in the key basis, and as such is straightforward to compute. The interesting problem is $H(A|E)$, as we have to minimise it over all states $\rho_{ABE}$ share by Alice, Bob, and Eve compatible with the statistics measured by Alice and Bob. To do that, we express the conditional entropy in terms of the (unnormalized) relative entropy:
\begin{equation}
H(A|E) = -D(\rho_{\tilde{A}E}||\id_{A} \otimes \rho_E),
\end{equation}
where $D(\rho||\sigma) = \tr(\rho(\log_2\rho-\log_2\sigma))$ and $\rho_{\tilde{A}E}$ is the classical-quantum state given by $\sum_{a} \proj{a} \otimes \rho_E(a)$, where $\rho_E(a) = \tr_{AB} \De{(A^a_0 \otimes \id_{BE}) \rho_{ABE}}$ and $\{A^a_0\}_{a=0}^{n-1}$ is the basis used by Alice to generate the secret key.

As shown in Ref.~\cite{brown2021}, one can get a convergent sequence of SDPs for the relative entropy by using a Gauss-Radau quadrature \cite{golub1973}. The Gauss-Radau quadrature is defined for every positive integer $m$ as a vector of $m$ coordinates $\bm{t}$ and a vector of $m$ weights $\bm{w}$ such that for all polynomials $g$ of degree $2m-2$ it holds that
\begin{equation}
\int_0^1\dint t\,g(t) = \sum_{i=1}^m w_i g(t_i).
\end{equation}
In addition we have that $w_i > 0$, $\sum_i w_i = 1$, $w_m = 1/m^2$, $t_i \in (0,1]$, and $t_m = 1$. A simple algorithm for computing $\bm{w}$ and $\bm{t}$ is described in \cite{golub1973}. 

As shown in Ref.~\cite{brown2021}, applying the Gauss-Radau quadrature to an integral representation of the logarithm yields a variational upper bound for the quantum relative entropy, valid when $\operatorname{supp}(\rho)\cap \operatorname{ker}(\sigma) = \varnothing$:
\begin{equation}
D(\rho||\sigma) \le -\sum_{i=1}^m \frac{w_i}{t_i \log 2} \inf_{Z_i} \de{1+ \tr\De{\rho(Z_i+Z_i^\dagger + (1-t_i) Z_i^\dagger Z_i)} + t_i \tr\de{\sigma Z_i Z_i^\dagger}},
\end{equation}
where the $Z_i$ are arbitrary complex matrices. This translates into a variational lower bound on the quantum conditional entropy for a fixed state $\rho_{ABE}$:
\begin{multline}
H(A|E)_{\rho_{ABE}} \ge \\ c_m + \sum_{i=1}^m \frac{w_i}{t_i \log 2} \inf_{\{Z_i^a\}_a} \sum_{a=0}^{n-1} \tr\De{\rho_{ABE}\de{A^a_0 \otimes \id_B \otimes (Z_i^a + {Z_i^a}^\dagger + (1-t_i){Z_i^a}^\dagger Z_i^a) + t_i \id_{AB} \otimes Z_i^a{Z_i^a}^\dagger}},
\end{multline}
where $c_m = \sum_{i=1}^m \frac{w_i}{t_i \log 2}$.

In the case of device independent QKD, one needs to minimise $H(A|E)$ over all states and measurement bases of Alice and Bob compatible with the measured statistics; here, though, we are interested in QKD with characterised devices. We take the measurement bases of Alice and Bob to be fixed, and do the minimisation over the ${\rho_{ABE}}$ that are compatible with the measured statistics. That is, we want to solve the following problem:
\begin{align}\label{eq:qkd_problem}
\inf_{\rho_{ABE},\{Z_i^a\}_{a,i}} c_m + \sum_{i=1}^m\sum_{a=0}^{n-1} \frac{w_i}{t_i \log 2} \tr & \De{\rho_{ABE}\de{A^a_0 \otimes \id_B \otimes (Z_i^a + {Z_i^a}^\dagger + (1-t_i){Z_i^a}^\dagger Z_i^a) + t_i \id_{AB} \otimes Z_i^a{Z_i^a}^\dagger}} \nonumber \\
&\text{s.t.}\quad \rho_{ABE} \ge 0,\quad \tr(\rho_{ABE}) = 1 \\
&\ \forall k\quad\tr\De{\rho_{ABE} (E_k \otimes \id_E)} = f_k, \nonumber 
\end{align}
for some POVM elements $E_k$ and probabilities $f_k$. In the next section we'll show how to do this with a SDP.

\section{SDP}\label{sec:mainsdp}

To turn problem \eqref{eq:qkd_problem} into a SDP, two ideas are involved: first is the familiar technique of absorbing the variables $Z_i^a$ into $\rho_{ABE}$ by defining new variables $\zeta_i^a = \tr_E \De{\rho_{ABE}(\id_{AB} \otimes {Z_i^a}^T ) }$. This still leaves the objective nonlinear, as it contains the products ${\zeta_i^a}^\dagger \zeta_i^a$ and $\zeta_i^a {\zeta_i^a}^\dagger$. The usual way to handle those is by building a moment matrix, but the difficulty here is that these must be $d^2 \times d^2$ matrices, and the usual moment matrix technique cannot handle these dimensional constraints. What we do instead is to build a \emph{block} moment matrix, representing each element of ${\zeta_i^a}^\dagger \zeta_i^a$ and $\zeta_i^a {\zeta_i^a}^\dagger$ by a moment, as done in Section 3 of Ref.~\cite{navascues2014}.

More formally, we have
\begin{theorem}
The problem \eqref{eq:qkd_problem} defined in the previous section is equivalent to the following SDP:

\begin{equation}\label{eq:qkd_sdp}
\begin{gathered}
\min_{\sigma,\{\zeta^a_i,\eta^a_i,\theta^a_i\}_{a,i}} c_m + \sum_{i=1}^m\sum_{a=0}^{n-1} \frac{w_i}{t_i \log 2} \tr\De{(A^a_0 \otimes \id_B)\de{\zeta^a_i + {\zeta^a_i}^\dagger + (1-t_i)\eta^a_i} + t_i\theta^a_i} \\
\text{s.t.}\quad \tr\de{\sigma} = 1,\quad \forall k\  \tr(E_k \sigma) = f_k\\
\forall a,i \quad \Gamma^1_{a,i} := \begin{pmatrix} 
\sigma & \zeta^a_i \\
{\zeta^a_i}^\dagger & \eta^a_i
\end{pmatrix}  \ge 0,\quad \Gamma^2_{a,i} := \begin{pmatrix} 
\sigma & {\zeta^a_i}^\dagger \\
\zeta^a_i & \theta^a_i
\end{pmatrix} \ge 0.
\end{gathered}
\end{equation}
\end{theorem}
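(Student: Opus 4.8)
The plan is to show the two programmes have equal optimal value by constructing, in each direction, a feasible point of one from a feasible point of the other with the same objective value. The direction ``\eqref{eq:qkd_problem} $\Rightarrow$ \eqref{eq:qkd_sdp}'' is a direct verification, whereas the converse is a realisability statement for the block moment matrices and is where the genuine work lies.

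For the first direction, take any feasible $\rho_{ABE},\{Z_i^a\}$ and perform the substitution indicated before the theorem: set $\sigma\coloneqq\tr_E(\rho_{ABE})$ and, for every $a,i$, the partial moments $\zeta_i^a\coloneqq\tr_E\De{\rho_{ABE}(\id_{AB}\otimes Z_i^a)}$, $\eta_i^a\coloneqq\tr_E\De{\rho_{ABE}(\id_{AB}\otimes {Z_i^a}^\dagger Z_i^a)}$ and $\theta_i^a\coloneqq\tr_E\De{\rho_{ABE}(\id_{AB}\otimes Z_i^a {Z_i^a}^\dagger)}$ (the transpose appearing in the paper's definition is the bookkeeping convention natural to the maximally-entangled realisation used below). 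The scalar constraints transfer at once, since $\tr(\sigma)=\tr(\rho_{ABE})=1$ and $\tr(E_k\sigma)=\tr\De{\rho_{ABE}(E_k\otimes\id_E)}=f_k$ because $E_k$ acts trivially on $E$. The objective then matches term by term through the identity $\tr\De{(A^a_0\otimes\id_B)\,\tr_E[\rho_{ABE}(\id_{AB}\otimes X)]}=\tr\De{\rho_{ABE}(A^a_0\otimes\id_B\otimes X)}$ applied to $X\in\{Z_i^a,{Z_i^a}^\dagger,{Z_i^a}^\dagger Z_i^a\}$, together with the analogous identity without the $A^a_0$ factor for the $\theta_i^a$ term.

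The one nontrivial point of this direction is the positivity of $\Gamma^1_{a,i}$ and $\Gamma^2_{a,i}$, which I would obtain by recognising each as an operator-valued Gram matrix in the sense of Section~3 of Ref.~\cite{navascues2014}. Given the list of operators $(\id_E, Z_i^a)$ on $E$, form the $2\times 2$ block matrix whose $(j,k)$ block is $\tr_E\De{\rho_{ABE}(\id_{AB}\otimes O_j^\dagger O_k)}$; this matrix is positive semidefinite because for any operators $X_1,X_2$ on $AB$ one has $\sum_{jk}\tr\De{X_j^\dagger\,\tr_E[\rho_{ABE}(\id_{AB}\otimes O_j^\dagger O_k)]\,X_k}=\tr\De{\rho_{ABE}\,W^\dagger W}\ge 0$, where $W=\sum_k X_k\otimes O_k$ and we used $\rho_{ABE}\ge 0$. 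Reading off the four blocks identifies this matrix with $\Gamma^1_{a,i}$; the choice $(\id_E,{Z_i^a}^\dagger)$ gives $\Gamma^2_{a,i}$. Hence every feasible point of \eqref{eq:qkd_problem} yields a feasible point of \eqref{eq:qkd_sdp} of equal value, so the SDP optimum is at most that of \eqref{eq:qkd_problem}.

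The converse is the main obstacle and amounts to proving that the block-positivity constraints are not only necessary but sufficient, i.e.\ that every SDP-feasible tuple is realised by an honest $\rho_{ABE}$ and operators $\{Z_i^a\}$. Here I would exploit that Eve's dimension is unconstrained: take $E$ a copy of $AB$ and $\rho_{ABE}\coloneqq(\sqrt\sigma\otimes\id_E)\ketbra{\Omega}{\Omega}(\sqrt\sigma\otimes\id_E)$ with $\ket{\Omega}$ the unnormalised maximally entangled vector, so that $\tr_E(\rho_{ABE})=\sigma$ and the scalar constraints hold automatically; the ``ricochet'' identity $(\id\otimes Z)\ket{\Omega}=(Z^T\otimes\id)\ket{\Omega}$ is exactly what makes the transpose convention natural. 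It then remains to extract, for each $(a,i)$, a single operator $Z_i^a$ on $E$ reproducing $\zeta_i^a,\eta_i^a,\theta_i^a$ as partial moments of this $\rho_{ABE}$. The delicate step is that $\sigma$ may be rank-deficient, so $Z_i^a$ can only be pinned down on $\operatorname{supp}(\sigma)$: a Schur-complement reading of $\Gamma^1_{a,i}\ge 0$ forces the range of $\zeta_i^a$ into that of $\sigma$ and yields the minimal choice with $\eta_i^a\ge{\zeta_i^a}^\dagger\sigma^+\zeta_i^a$, the inequality being saturated at the SDP optimum because $\eta_i^a$ and $\theta_i^a$ enter the objective with nonnegative weights $(1-t_i)$ and $t_i$; any residual slack is absorbed by enlarging $E$, which does not disturb $\zeta_i^a$. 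The crux, and the reason a block moment matrix is used rather than a scalar one, is precisely that representing each entry of ${Z_i^a}^\dagger Z_i^a$ and $Z_i^a{Z_i^a}^\dagger$ by its own moment keeps the reconstructed $Z_i^a$ genuinely $d$-dimensional, a dimension constraint a scalar NPA-type matrix cannot enforce. With a common $\rho_{ABE}$ and operators realising all moments, the objectives again coincide, giving the reverse inequality and hence equality of the optimal values.
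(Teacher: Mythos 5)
Your overall strategy coincides with the paper's (value-preserving maps in both directions, purifying $\sigma$ through a maximally entangled vector, Schur complements, enlarging Eve's space), but two steps fail as written, and both failures trace back to the transpose you dismissed as bookkeeping. In the forward direction your central identity is false: positivity of the block matrix with blocks $\Gamma_{jk}=\tr_E\De{\rho_{ABE}(\id_{AB}\otimes O_j^\dagger O_k)}$ amounts to $\sum_{jk}\bra{u_j}\Gamma_{jk}\ket{u_k}=\tr\De{\rho\sum_{jk}\de{\ket{u_k}\bra{u_j}\otimes O_j^\dagger O_k}}\ge 0$, whereas your $W^\dagger W=\sum_{jk}X_j^\dagger X_k\otimes O_j^\dagger O_k$ carries the $AB$ index in the opposite order ($X_j^\dagger X_k$ versus the $X_k X_j^\dagger$ your trace actually produces); the relevant operator is a Gram operator only after a partial transpose. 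The claim is not merely unproven but wrong for your transpose-free moments: take $\rho_{ABE}$ maximally entangled between $AB$ and $E$, so that $\sigma=\id/d^2$, $\zeta^a_i=Z^T/d^2$, $\eta^a_i=Z^T\bar{Z}/d^2$; the Schur complement of $\Gamma^1_{a,i}$ is then proportional to $Z^T\bar{Z}-\bar{Z}Z^T=\overline{Z^\dagger Z-ZZ^\dagger}$, which is traceless and nonzero whenever $Z$ is not normal, so $\Gamma^1_{a,i}\not\ge 0$. The transpose in Equation \eqref{eq:cpmap} is precisely what makes $\Xi$ completely positive rather than completely copositive, so that $\Gamma^g_{a,i}=(\id\otimes\Xi)(\gamma^g_{a,i})$ with $\gamma^g_{a,i}$ genuine Gram matrices; the fix is mechanical (conjugate $Z\mapsto\bar{Z}$ before taking moments, legitimate since the $Z$'s are unconstrained, exactly as the paper remarks), but your written argument does not survive without it.

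In the converse direction the crux is asserted rather than proven: given an SDP-feasible tuple with slack, you must exhibit a \emph{single} operator $Z^a_i$ on Eve's space that reproduces $\zeta^a_i$, $\eta^a_i$ and $\theta^a_i$ simultaneously from the fixed purification. The paper's resolution is the explicit doubled-space operator
\begin{equation*}
Z^a_i=\begin{pmatrix} R_{\zeta^a_i} & \sqrt{R_{\theta^a_i}-R_{\zeta^a_i}R_{\zeta^a_i}^\dagger}\\ \sqrt{R_{\eta^a_i}-R_{\zeta^a_i}^\dagger R_{\zeta^a_i}} & 0\end{pmatrix},
\end{equation*}
where the zero block and the fact that Eve's reduced state is supported only on the first half of $\mathcal{H}_E$ are what allow \emph{both} excesses to be absorbed at once; ``any residual slack is absorbed by enlarging $E$'' names the destination without supplying this construction. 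Your fallback via saturation is also not automatic: since $A^a_0\otimes\id_B$ is singular and $1-t_m=0$, optimal points need not have minimal $\eta^a_i,\theta^a_i$ — there are flat directions. What is true, and would give a correct alternative route avoiding the doubling, is a without-loss-of-generality replacement: for \emph{any} feasible point one may lower $\eta^a_i$ and $\theta^a_i$ to their generalised Schur-complement minima without breaking feasibility or increasing the objective (nonnegative coefficients traced against positive semidefinite matrices), after which an undoubled $Z^a_i$ realises all moments exactly. But carrying this out again requires the correct conjugation convention: under your $\rho_{ABE}=(\sqrt{\sigma}\otimes\id)\proj{\Omega}(\sqrt{\sigma}\otimes\id)$ with transpose-free moments, the moment of ${Z}^\dagger Z$ comes out as $\zeta\sigma^+\zeta^\dagger$ rather than the $\zeta^\dagger\sigma^+\zeta$ forced by $\Gamma^1_{a,i}\ge 0$, so $\eta$ and $\theta$ pair with the wrong constraints — the same mismatch as in the forward direction.
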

\begin{proof}
First notice that if we define\footnote{We could also define it without the transpose, which would make it a completely copositive map, but this would result in a less convenient SDP.}
\begin{equation}\label{eq:cpmap}
\Xi(M) := \tr_E \De{\rho_{ABE}(\id_{AB} \otimes M_E^T) }
\end{equation}
then $\Xi(M^\dagger) = \Xi(M)^\dagger$ and 
\begin{equation}
\tr\De{\rho_{ABE}\de{K_{AB} \otimes M_E^T}} = \tr\de{K_{AB} \Xi(M_E)},
\end{equation}
so if we define
\begin{subequations}\label{eq:xiequations}
\begin{gather}
\sigma := \Xi(\id) \\
\zeta^a_i := \Xi(Z^a_i) \\
\eta^a_i := \Xi({Z^a_i}^\dagger Z^a_i) \\
\theta^a_i := \Xi(Z^a_i{Z^a_i}^\dagger )
\end{gather}
\end{subequations}
then the objectives and equality constraints of \eqref{eq:qkd_problem} and \eqref{eq:qkd_sdp} match, modulo a complex conjugation on the $Z^a_i$ that we can freely do, since they are unconstrained.

It remains to show that the existence of a state $\rho_{ABE}$ and complex matrices $Z_i^a$ is equivalent to the existence of positive semidefinite matrices $\Gamma^1_{a,i}$ and $\Gamma^2_{a,i}$ related via Equations \eqref{eq:xiequations}. To see the forward direction, first note that the map $\Xi$ is completely positive: its Choi matrix is given by
\begin{align}
\mathfrak{C}(\Xi) &= \sum_{ij} \ket{i}\bra{j} \otimes \Xi(\ket{i}\bra{j}) \\
				  &= \sum_{ij} \ket{i}\bra{j} \otimes \De{(\id \otimes \bra{i}) \rho (\id \otimes \ket{j})} \\
   				  &= \bigg(\sum_{i} \ket{i}\otimes \id \otimes \bra{i}\bigg) \rho \bigg(\sum_{j}\ket{j} \otimes \id \otimes \bra{j}\bigg)^\dagger,
\end{align}
which is manifestly positive semidefinite.

Let then
\begin{equation}
\gamma^g_{a,i} := \sum_{k,l=0}^1 \ket{k}\bra{l} \otimes {s^g_k}^\dagger s^g_l
\end{equation}
for $s^1_0 = s^2_0 = \id$ and $s^1_1 = {s^2_1}^\dagger = Z^a_i$. The matrices $\gamma^g_{a,i}$ are positive semidefinite, as
\begin{equation}
\gamma^g_{a,i} = \de{\sum_{k=0}^1 \bra{k} \otimes s^g_k}^\dagger\de{\sum_{l=0}^1 \bra{l} \otimes s^g_l},
\end{equation}
and so are $\Gamma^1_{a,i}$ and $\Gamma^2_{a,i}$, as
\begin{equation}
\Gamma^g_{a,i} = (\id \otimes \Xi)(\gamma^g_{a,i})
\end{equation}
and $\Xi$ is completely positive. This completes the proof of the forward direction.

To see the converse direction, let the eigendecomposition of $\sigma$ be $\sum_{i=0}^{d^2-1} \lambda_i \ket{v_i}\bra{v_i}$, and consider its purification $\ket{\psi} := \sum_{i=0}^{d^2-1} \sqrt{\lambda_i} \ket{v_i}\ket{i}_E$. In the following, we assume that Eve's Hilbert space is $\mathcal{H}_E:=\operatorname{span}\{\ket{i}\}_{i=0}^{2d^2-1}$, and regard the state shared by Alice, Bob and Eve as $\rho_{ABE} = \ket{\psi}\bra{\psi}$. Eve's reduced density matrix has therefore support on the space spanned by just the first $d^2$ basis states of $\mathcal{H}_E$. The ``extra space'' given by $\operatorname{span}\{\ket{i}\}_{i=d^2-1}^{2d^2-1}$ will be needed by the end of the proof.

Now, define the matrix
\begin{equation}
W := \sum_{i=0}^{d^2-1} \sqrt{\lambda_i}\ket{v_i}\bra{i}.
\end{equation}
With the state $\rho_{ABE}$ defined above, the identity $\Xi(M) = WMW^\dagger$ can be seen to hold.

Define then 
\begin{equation}\label{eq:pos1}
\begin{pmatrix} 
\tilde\id & R_{\zeta^a_i} \\
R_{\zeta^a_i}^\dagger & R_{\eta^a_i}
\end{pmatrix} := \begin{pmatrix} W^+ & 0\\ 0 & W^+ \end{pmatrix}\Gamma^1_{a,i}\begin{pmatrix} {W^+}^\dagger & 0\\ 0 & {W^+}^\dagger \end{pmatrix},
\end{equation}
and
\begin{equation}\label{eq:pos2}
\begin{pmatrix} 
\tilde\id & {R^\dagger_{\zeta^a_i}} \\
R_{\zeta^a_i} & R_{\theta^a_i}
\end{pmatrix} := \begin{pmatrix} W^+ & 0\\ 0 & W^+ \end{pmatrix}\Gamma^2_{a,i}\begin{pmatrix} {W^+}^\dagger & 0\\ 0 & {W^+}^\dagger \end{pmatrix}.
\end{equation} Here $W^+$ denotes the pseudo-inverse of $W$; and $\tilde\id$, the projector corresponding to the support of $\sigma$.

The matrices on the left-hand side of Equations \eqref{eq:pos1} and \eqref{eq:pos2} are positive semidefinite, which implies that $R_{\eta^a_i} - R_{\zeta^a_i}^\dagger R_{\zeta^a_i} \ge 0$ and that $R_{\theta^a_i} - R_{\zeta^a_i}R_{\zeta^a_i}^\dagger \ge 0$. We are ready to build Eve's operators $Z^a_i\in B(\mathcal{H}_E)$:
\begin{equation}
Z^a_i := \begin{pmatrix} R_{\zeta^a_i} & \sqrt{R_{\theta^a_i} - R_{\zeta^a_i}R_{\zeta^a_i}^\dagger} \\
						\sqrt{R_{\eta^a_i} - R_{\zeta^a_i}^\dagger R_{\zeta^a_i}} & 0 \end{pmatrix},
\end{equation}
where $\sqrt{\cdot}$ denotes the unique positive semidefinite square root. In the expression above, the matrix block decomposition reflects the division of Eve's Hilbert space $\mathcal{H}_E$ into the orthogonal subspaces $\operatorname{span}\{\ket{i}\}_{i=0}^{d^2-1}$ and $\operatorname{span}\{\ket{i}\}_{i=d^2}^{2d^2-1}$. With the definition above, it is easy to see that Equations \eqref{eq:xiequations} are satisfied. This completes the proof of the converse direction.
\end{proof}

Unlike the case for device independent QKD, for fixed $m$ this is not a hierarchy, but a single SDP. We thus have a converging hierarchy for the conditional entropy depending on a single parameter $m$. 

In order to obtain a reliable solution for an SDP numerically, it is crucial that it satisfies strict feasibility, that is, that there exists a feasible solution for which all eigenvalues in the positive semidefiniteness constraints are strictly positive. For SDP \eqref{eq:qkd_sdp} this is the case iff there exists a full rank state $\sigma$ compatible with the measurement results. If this is not the case then one must reformulate the SDP with facial reduction to make it strictly feasible \cite{drusvyatskiy2017}. This is of little relevance in practice, as in any real experiment the state will be full rank. For completeness, however, we give a proof of this characterisation of strict feasibility and show how to perform facial reduction in Appendix \ref{sec:facialreduction}.

The SDP \eqref{eq:qkd_sdp} takes a long time to run in higher dimensions, so it is important to look for techniques to optimise it. We have explored the use of symmetrisation \cite{gatermann2004}: if the objective of an SDP is invariant under some transformation applied to its variables and this transformation doesn't affect whether the constraints are satisfied, then this transformation is a symmetry of the SDP and can be used to eliminate redundant variables, which can dramatically improve the running time.

A simple and powerful symmetrisation applies when $A^a_0$ and $E_k$ are real matrices for all $a,k$: in this case we can choose all our variables to be real, which has a dramatic impact on performance, mainly due to the poor support current solvers have for complex variables\footnote{To the best of our knowledge, the only solvers that support complex numbers natively are SeDuMi 1.3.7 \cite{sturm1999} and Hypatia 0.7.2 \cite{Coey2021}.}. Another useful symmetrisation applies when the $E_k$ satisfy a certain permutation symmetry, which allows us to reduce the number of variables $\zeta^a_i,\eta^a_i,\theta^a_i$ by at most a factor of $d$, which again can dramatically improve performance. This is proven and explored in detail in Appendix \ref{sec:symmetrisation}. The focus of our paper, however, is using SDP \eqref{eq:qkd_sdp} to calculate key rates using the full experimental data available, in which case the symmetrisation seldom applies.

\section{Numerical results}\label{sec:numerics}

In order to illustrate the performance of our technique, we ran the SDP \eqref{eq:qkd_sdp} for various QKD protocols. We solved the SDP with MOSEK \cite{mosek} via YALMIP\footnote{It is necessary to use the current development release.} \cite{yalmip} on MATLAB\footnote{The code also works on Octave, provided we change the solver to SeDuMi \cite{sturm1999}.}. In order to model the effect of noise, we assume that Alice and Bob share an isotropic state:
\begin{equation}
\rho_\text{iso}(v) = v \proj{\phi^+} + (1-v) \id/d^2,
\end{equation}
where $\ket{\phi^+} = \frac1{\sqrt{d}}\sum_{i=0}^{d-1}\ket{ii}$ and $v \in [0,1]$ is the visibility.

\subsection{MUB protocol}\label{sec:mubprotocol}

The first protocol we consider is the $d+1$ MUBs protocol from Ref.~\cite{sheridan2010}, where Alice and Bob measure a complete set of MUBs for some prime $d$, with Bob's MUBs being the transpose of Alice's. They estimate the probabilities of their outcomes being equal or differing by some integer, and use only this data to compute the key rate. 

Both these limitations, that the dimension must be prime and the full data not being used, came from the proof technique and do not apply here. Therefore we extend the protocol to use all available data, and furthermore allow the dimension $d$ to be any integer. When $d$ is equal to a prime power we one can use the well-known Wootters-Fields construction \cite{wootters1989} to generate the MUBs. For other dimensions it is widely believed that no set of $d+1$ MUBs exists \cite{bengtsson2007}, so we generate numerically bases that are only approximately unbiased via a simple gradient descent. A set of $n$ approximate MUBs is obtained from numerically minimising \cite{bengtsson2007b}
\begin{equation}
    \min_{\{U^{(r)}\}_r}\frac1{(d-1)\binom{n}{2}}\sum_{k<l}  \sum_{i,j=0}^{d-1} \left( \left|({U^{(k)}}^\dagger U^{(l)})_{i,j}\right|^2 -\frac{1}{d}\ \right)^2,
\end{equation}
where $\{U^{(r)}\}_{r=1}^n$ is a collection of unitary matrices. Note that the value is zero for a set of MUBs.

The key basis is the computational basis, that is, $A^a_0 = \proj{a}$, and the POVM elements used in the SDP are given by
\begin{equation}
E_{k,i,j} := \Pi^i_k \otimes {\Pi^j_k}^T,
\end{equation}
where $\Pi^i_k$ is the projector onto the $i$th vector of the $k$th MUB, with $k$ going from 0 to $d$ and $i,j$ going from 0 to $d-1$.

In order to obtain an analytical formula to compare against our numerical results, note that when we actually have $d+1$ MUBs, that is, when $d$ is a prime or a prime power, these measurements are tomographically complete for the isotropic state\footnote{More precisely, if Alice and Bob measure the probability of getting the same outcome in all $d+1$ MUBs, and all probabilities turn out to be equal to $v + (1-v)/d$, then the only quantum state compatible with these results is the isotropic state with visibility $v$. Note that this implies that using the full data does not make any difference for the isotropic state when $d$ is a prime or a prime power. Nevertheless it does make a difference for general states in general dimensions, so it is still useful to define the protocol to use the full data.} \cite{bavaresco2018}, so the key rate must be equal to the tomographic rate. That is the rate one obtains when Alice and Bob make a tomographically complete set of measurements, and as such is the best possible rate for a given quantum state. When our $d+1$ bases are not exactly mutually unbiased this no longer holds, so for other dimensions the tomographic rate is only an upper bound for the key rate.

For the isotropic state the tomographic rate is given by \cite{liang2003,zhan2020}
\begin{equation}\label{eq:tomographic}
K_\text{iso}(v,d) = \log_2(d) -(1-1/d^2)(1-v) \log_2(d^2-1)-h(v+(1-v)/d^2),
\end{equation}
where $h(\cdot)$ is the binary entropy. Note that Equation \eqref{eq:tomographic} coincides with the key rate computed in Ref.~\cite{sheridan2010} for the isotropic state when $d$ is prime.

We did the calculation for $d=2,4,6,8$, setting $m=8$. As we can see in Figure \ref{fig:mub}, our technique closely matches the exact key rate for $d=2,4,8$, whereas for $d=6$ it is slightly below the upper bound. This indicates that the problem of existence of MUBs, although mathematically fascinating, is irrelevant for the practical implementation of this QKD protocol.

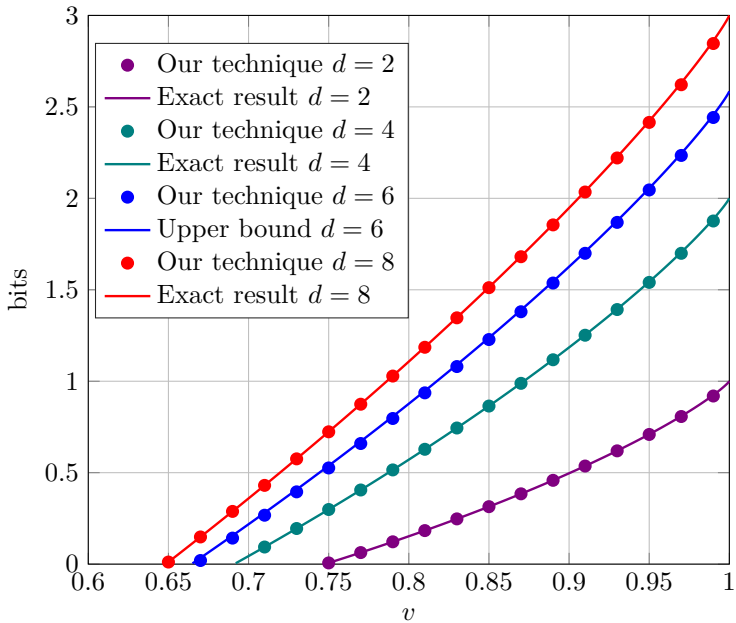
\begin{figure}[htb]
	\centering
	\begin{tikzpicture}
		
		\begin{axis}[%
			scale only axis,
			xmin=0.6,
			xmax=1,
			ymin=0,
			ymax=3,
			grid=major,
			xlabel={$v$},
			ylabel={bits},
			xtick={0.60, 0.65, 0.70, 0.75, 0.80, 0.85, 0.90, 0.95, 1.00},
			axis background/.style={fill=white},
			legend style={at={(0.5,0.95)},legend cell align=left, align=left, draw=white!15!black}
			]
			\addplot[color=violet, line width=0.9pt, only marks] table[col sep=comma] {plots/data_mub2};
			\addlegendentry{Our technique $d=2$}
			\addplot[color=violet, line width=0.9pt] table[col sep=comma] {plots/analytic_mub2};
			\addlegendentry{Exact result $d=2$}

			\addplot[smooth, color=teal, line width=0.9pt, only marks] table[col sep=comma] {plots/data_mub4};
			\addlegendentry{Our technique $d=4$}
			\addplot[color=teal, line width=0.9pt] table[col sep=comma] {plots/analytic_mub4};
			\addlegendentry{Exact result $d=4$}

			\addplot[color=blue, line width=0.9pt, only marks] table[col sep=comma] {plots/data_mub6};
			\addlegendentry{Our technique $d=6$}
			\addplot[color=blue, line width=0.9pt] table[col sep=comma] {plots/analytic_mub6};
			\addlegendentry{Upper bound $d=6$}

			\addplot[color=red, line width=0.9pt, only marks] table[col sep=comma] {plots/data_mub8};
			\addlegendentry{Our technique $d=8$}
			\addplot[color=red, line width=0.9pt] table[col sep=comma] {plots/analytic_mub8};
			\addlegendentry{Exact result $d=8$}

		\end{axis}
	\end{tikzpicture}%
	\caption{Asymptotic key rate in bits per round versus isotropic state visibility $v$, using the MUB protocol from Section \ref{sec:mubprotocol}. For $d=6$ we used a set of bases that is only approximately unbiased. Note that for $d=2,4,8$ our technique closely matches the exact result, while for $d=6$ it is slightly below the upper bound \eqref{eq:tomographic}. Calculations performed with $m=8$.}
	
	\label{fig:mub}
\end{figure}

\subsection{Subspaces protocol}\label{sec:subspace}

When the experimental setup suffers from a high amount of noise, it is advantageous to do a filtration step before running the QKD protocol. That's the idea behind the subspace protocol proposed in Ref.~\cite{doda2021}. In it the Hilbert space of dimension $d$ is partitioned in $d/k$ subspaces of dimension\footnote{The use of equal-sized subspaces is only for simplicity, any partition of $d$ can be used.} $k$. Alice and Bob first check whether they obtained an outcome belonging to the same subspace of the Hilbert space; if they haven't, they discard the round. Otherwise, they proceed with the protocol, with the state conditioned on belonging to the subspace they are in, which is in general less noisy than the state they started with.

The main weakness of the original security analysis is that it approximates the von Neumann entropy by the min-entropy, which leads to a loose lower bound on the secret key rate. Here we can fix this problem. We further note that the subspace technique is very flexible with respect to which protocol is used in the subspaces. Here we use the $d+1$ MUBs protocol from section \ref{sec:mubprotocol}.

For the isotropic state and when $k$ is a prime or prime power the key rate is given by 
\begin{equation}
K_{\text{iso},\text{sub}}(v,k,d) =  p K_\text{iso}(v/p,k),
\end{equation}
where $p = v + \frac{k}{d}(1-v)$ is the probability that Alice and Bob obtain outcomes in the same subspace, and $K_\text{iso}$ is given by Equation \eqref{eq:tomographic}.

We ran that SDP \eqref{eq:qkd_sdp} for $d=8$ and $k=2,4,8$. Note that the MUB protocol is ran in each subspace, so $A^a_0$ and $E_k$ are simply those of the MUB protocol, but with the size of the subspace $k$ playing the role of the dimension $d$. The measurement that is implemented physically is the direct sum of the measurements in the subspaces, but that does not appear explicitly in the key rate calculation. The results are shown in Figure \ref{fig:subspace}. One can see that the numerical results closely match the analytical formula. For the sake of comparison we also showed the much inferior key rates that are obtained when using the min-entropy technique.

\begin{figure}[htb]
	\centering
	\begin{tikzpicture}
		
		\begin{axis}[%
			scale only axis,
			xmin=0.4,
			xmax=1,
			ymin=0,
			ymax=3,
			grid=major,
			xlabel={$v$},
			ylabel={bits},
			xtick={0.40, 0.45, 0.50, 0.55, 0.60, 0.65, 0.70, 0.75, 0.80, 0.85, 0.90, 0.95, 1.00},
			axis background/.style={fill=white},
			legend style={at={(0.55,0.95)},legend cell align=left, align=left, draw=white!15!black}
			]
			\addplot[color=violet, line width=0.9pt, only marks] table[col sep=comma] {plots/data_subspace28};
			\addlegendentry{Our technique $k=2$}
			\addplot[color=violet, line width=0.9pt] table[col sep=comma] {plots/analytic_subspace28};
			\addlegendentry{Exact result $k=2$}
			\addplot[color=violet, line width=0.9pt, dashed] table[col sep=comma] {plots/min_analytic_subspace28};
			\addlegendentry{Min-entropy $k=2$}

			\addplot[smooth, color=teal, line width=0.9pt, only marks] table[col sep=comma] {plots/data_subspace48};
			\addlegendentry{Our technique $k=4$}
			\addplot[color=teal, line width=0.9pt] table[col sep=comma] {plots/analytic_subspace48};
			\addlegendentry{Exact result $k=4$}
			\addplot[color=teal, line width=0.9pt, dashed] table[col sep=comma] {plots/min_analytic_subspace48};
			\addlegendentry{Min-entropy $k=4$}

			\addplot[color=red, line width=0.9pt, only marks] table[col sep=comma] {plots/data_mub8};
			\addlegendentry{Our technique $k=8$}
			\addplot[color=red, line width=0.9pt] table[col sep=comma] {plots/analytic_subspace88};
			\addlegendentry{Exact result $k=8$}
			\addplot[color=red, line width=0.9pt, dashed] table[col sep=comma] {plots/min_analytic_subspace88};
			\addlegendentry{Min-entropy $k=8$}

		\end{axis}
	\end{tikzpicture}%
	\caption{Asymptotic key rate in bits per round versus isotropic state visibility $v$, using the subspace protocol from Section \ref{sec:subspace}. Dashed lines show the much inferior key rates that are obtained with the min-entropy technique. Calculations performed with $m=8$.}
	
	\label{fig:subspace}
\end{figure}
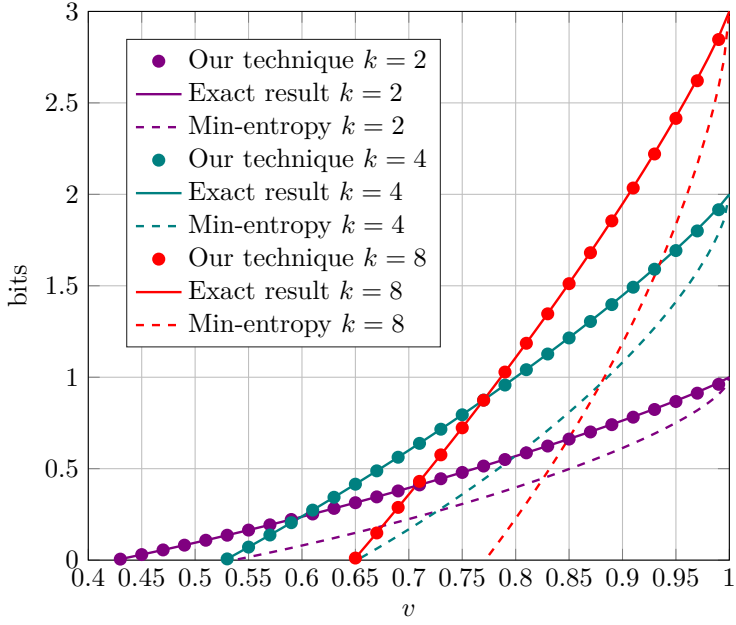

\subsection{Overlapping bases protocol}\label{sec:overlap}

For some experimental setups it is not feasible to measure high-dimensional MUBs, so we consider a simpler protocol where only superpositions of nearest neighbours have to be measured. Alice and Bob use the computational basis as the key basis, $A^a_0 = \proj{a}$, and measure the POVM elements
\begin{equation}
E_{k,i,j} = \proj{v_k^i} \otimes \Proj{v_k^j}^T,
\end{equation}
where $d$ must be even, $k$ goes from 0 to 4, $i,j$ go from 0 to $d-1$, and the vectors $\ket{v_k^i}$ are given by
\begin{subequations}
\begin{gather}
\{\ket{v_0^l}\}_{l=0}^{d-1} = \{\ket{0},\ldots,\ket{d-1}\} \\
\{\ket{v_1^l}\}_{l=0}^{d-1} = \DE{\frac{\ket{0} + \ket{1}}{\sqrt2}, \frac{\ket{0} - \ket{1}}{\sqrt2}, \ldots, \frac{\ket{d-2} + \ket{d-1}}{\sqrt2},\frac{\ket{d-2} - \ket{d-1}}{\sqrt2}} \\
\{\ket{v_2^l}\}_{l=0}^{d-1} = \DE{\frac{\ket{0} + i\ket{1}}{\sqrt2}, \frac{\ket{0} - i\ket{1}}{\sqrt2}, \ldots, \frac{\ket{d-2} + i\ket{d-1}}{\sqrt2},\frac{\ket{d-2} - i\ket{d-1}}{\sqrt2}} \\
\{\ket{v_3^l}\}_{l=0}^{d-1} = \DE{\ket{0},\frac{\ket{1} + \ket{2}}{\sqrt2}, \frac{\ket{1} - \ket{2}}{\sqrt2}, \ldots, \frac{\ket{d-3} + \ket{d-2}}{\sqrt2},\frac{\ket{d-3} - \ket{d-2}}{\sqrt2}, \ket{d-1}} \\
\{\ket{v_4^l}\}_{l=0}^{d-1} = \DE{\ket{0},\frac{\ket{1} + i\ket{2}}{\sqrt2}, \frac{\ket{1} - i\ket{2}}{\sqrt2}, \ldots, \frac{\ket{d-3} + i\ket{d-2}}{\sqrt2},\frac{\ket{d-3} - i\ket{d-2}}{\sqrt2}, \ket{d-1}}.
\end{gather}
\end{subequations}

Note that here we are not measuring only the probabilities that the outcomes are equal or differ by some constant, but instead are taking into account the full experimental data. Several of the projectors we are measuring are linearly dependent, however, so one should be careful to select a linearly independent subset for the data analysis.

This protocol is specially appropriate for energy-time entanglement setups, in which the time of arrival of photons are collected into time-bins. In such setups it is infeasible to measure MUBs, but the superposition of neighbouring bins is accessible with the use of Franson interferometer (see Refs.~\cite{tiranov2017,ecker2019,bulla2022} for experiments using such setups).

We ran the SDP \eqref{eq:qkd_sdp} for $d=4,6,8$. The results are shown in Figure \ref{fig:overlap}. We compare them with key rates obtained with the subspace protocol from Section \ref{sec:subspace}, as that protocol also needs only superpositions between nearest neighbours. We see that for low to moderate amounts of noise the overlap protocol outperforms it.

\begin{figure}[htb]
	\centering
	\begin{tikzpicture}
		
		\begin{axis}[%
			scale only axis,
			xmin=0.7,
			xmax=1,
			ymin=0,
			ymax=3,
			grid=major,
			xlabel={$v$},
			ylabel={bits},
			xtick={0.70, 0.75, 0.80, 0.85, 0.90, 0.95, 1.00},
			axis background/.style={fill=white},
			legend style={at={(0.6,0.95)},legend cell align=left, align=left, draw=white!15!black}
			]
			\addplot[color=teal, line width=0.9pt] table[col sep=comma] {plots/data_overlap4};
			\addlegendentry{Overlap $d=4$}
			\addplot[color=blue, line width=0.9pt] table[col sep=comma] {plots/data_overlap6};
			\addlegendentry{Overlap $d=6$}
			\addplot[color=red, line width=0.9pt] table[col sep=comma] {plots/data_overlap8};
			\addlegendentry{Overlap $d=8$}
			
			\addplot[color=teal, line width=0.9pt, dashed] table[col sep=comma] {plots/analytic_subspace24};
			\addlegendentry{Subspace $k=2,d=4$}
			\addplot[color=blue, line width=0.9pt, dashed] table[col sep=comma] {plots/analytic_subspace26};
			\addlegendentry{Subspace $k=2,d=6$}
			\addplot[color=red, line width=0.9pt, dashed] table[col sep=comma] {plots/analytic_subspace28};
			\addlegendentry{Subspace $k=2,d=8$}
			
		\end{axis}
	\end{tikzpicture}%
	\caption{Asymptotic key rate in bits per round versus isotropic state visibility $v$, using the overlap protocol from Section \ref{sec:overlap}. Dashed line shows the key rate obtained with the subspace protocol from Section \ref{sec:subspace} for the sake of comparison, as that protocol also needs only superpositions between nearest neighbours. We see that for low to moderate amounts of noise the overlap protocol outperforms it. Calculations performed with $m=8$.}
	
	\label{fig:overlap}
\end{figure}
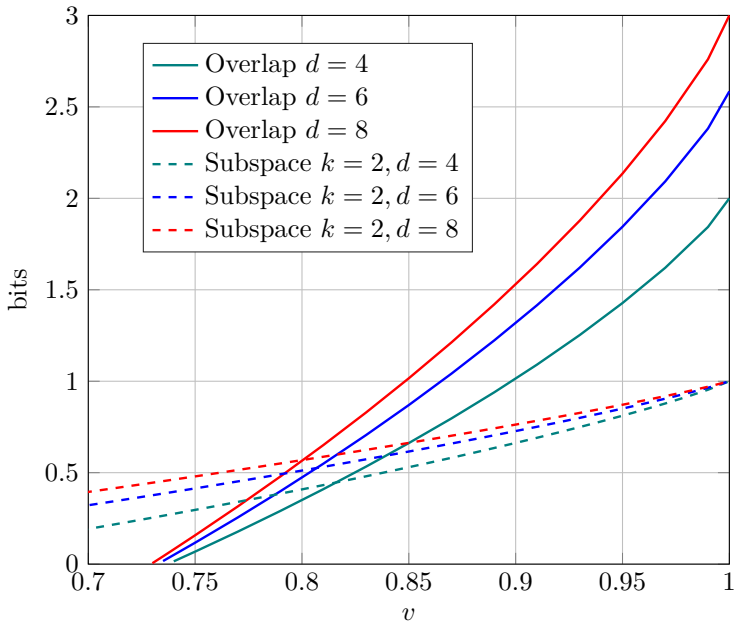  

\section{Dealing with experimental data}\label{sec:experimental}

In order to calculate the conditional entropy with SDP \eqref{eq:qkd_sdp}, we need the exact probabilities of each measurement outcome. Those are not available in real experiments, as it's fundamentally impossible to measure probabilities, even in the absence of error. Instead, one measures relative frequencies, from which one deduces that the probabilities are within some region with some level of confidence. Note that a naïve identification of relative frequencies with probabilities typically leads to probabilities that are not compatible with any quantum state (commonly referred to as quantum state with negative eigenvalues). Computing instead the confidence region compatible with your relative frequencies automatically deals with this issue\footnote{Maximum likelihood estimation \cite{Hradil1997} always produces valid quantum states, but it doesn't provide confidence regions, and as such it is unsuitable for our purposes.}. Accordingly, we need to modify SDP \eqref{eq:qkd_sdp} to not use exact probabilities, but instead minimise the conditional entropy over the confidence region. 

This leads us to the thorny issue of how to compute the confidence region in the first place. The method we choose is Bayesian parameter estimation \cite{buzek1998,schack2001,blumekohout2010}, as it naturally provides a confidence region in the form of the high-density posterior. An exact computation is only feasible for small amounts of data in small dimensions \cite{blumekohout2012,shang2013}, and provides a confidence region with a very complex description, so we have to use approximate methods. The only method we found in the literature is based on using particle filters \cite{ferrie2014,granade2016}. Its complexity scales exponentially with the dimension and is therefore unsuitable for our purposes. We propose thus a new method.

The method we use is based on approximating the likelihood function by a Gaussian. This will be a good approximation whenever the trials are independent and the amount of data is large. It results in a confidence region consisting of the intersection of an ellipsoid with the set of probabilities that can be produced by quantum states. The crucial advantage of this confidence region is that it can be described by semidefinite constraints, and as such be incorporated in our SDP. More formally, the confidence region is given by
\begin{equation}
C = \{\bm{p}\ |\ \left\langle \bm{p}-\bm{f}, \Sigma^{-1}(\bm{p}-\bm{f})\right\rangle \le \chi^2,\quad \exists \sigma\ \tr(\bm{E}\sigma) = \bm{p},\ \tr(\sigma) = 1,\ \sigma \ge 0\},
\end{equation}
where $\bm{f}$ are the measured frequencies, $\bm{p}$ are the probabilities, $\Sigma$ is the covariance matrix of the Gaussian, $\chi$ is a parameter that determines the size of the ellipsoid, and $\bm{E}$ is a vector whose components are the POVM elements. The algorithm to determine $\Sigma$ and $\chi$ is described in Appendix \ref{sec:montecarlo}.

The modified SDP is then given by
\begin{equation}\label{eq:sdp_region}
\begin{gathered}
\min_{\sigma,\bm{p},\{\zeta^a_i,\eta^a_i,\theta^a_i\}_{a,i}} c_m + \sum_{i=1}^m\sum_{a=0}^{n-1} \frac{w_i}{t_i \log 2} \tr\De{(A^a_0 \otimes \id_B)\de{\zeta^a_i + {\zeta^a_i}^\dagger + (1-t_i)\eta^a_i} + t_i\theta^a_i} \\
\text{s.t.}\quad \tr\de{\sigma} = 1,\quad \tr(\bm{E} \sigma) = \bm{p},\quad \left\langle \bm{p}-\bm{f}, \Sigma^{-1}(\bm{p}-\bm{f})\right\rangle \le \chi^2 \\
\forall a,i \quad \Gamma^1_{a,i} := \begin{pmatrix} 
\sigma & \zeta^a_i \\
{\zeta^a_i}^\dagger & \eta^a_i
\end{pmatrix}  \ge 0,\quad \Gamma^2_{a,i} := \begin{pmatrix} 
\sigma & {\zeta^a_i}^\dagger \\
\zeta^a_i & \theta^a_i
\end{pmatrix} \ge 0.
\end{gathered}
\end{equation}

\subsection{Examples}

As an example, we calculated the key rate for the overlap protocol with the data obtained in an experiment reported in Refs.~\cite{bulla2022,bulla2023}. In this experiment time-bin entanglement is produced and distributed over a 10.2km free-space channel over Vienna. The discretisation used here for the key rate calculation was used in Ref.~\cite{bulla2023} to certify genuinely high-dimensional entanglement. We used the data for $d=4$, and obtained $0.4038$. The measurements used were a linearly independent subset of the real projectors $E_{0,i,j}, E_{1,i,j}, E_{3,i,j}$. For comparison, we also computed the key rate with the overlap protocol with $d=4$ and $k=2$, also using only the real projectors, and obtained $0.3868$.

As another example, we calculated the key rate for the MUB protocol with the data obtained by measuring the transverse position-momentum degree-of-freedom of photons using tailored macro-pixel bases, for dimension $d=3$. This data was originally presented in Ref.~\cite{valencia2020} in the context of entanglement certification. Using the full data we got the key rate $1.3310$. For comparison, we also computed the key rate with the original MUB 
protocol from Ref.~\cite{sheridan2010}, that doesn't use the full data, and got $1.3553$. 

Surprisingly, the key rate was \emph{lower} when using the full data. This is because in both cases we are using a flat prior over the parameter space, but the parameter space is much larger when using the full data, which dilutes the effect of Bayesian conditioning. The number of counts in the data set was roughly 600 per setting, which is too small to overcome this effect. If we artificially multiply the number of counts by 10 (which doesn't change the relative frequencies), we get the expected result: key rate $1.4303$ with the full data, and $1.4021$ with the original protocol. This shows that when using the full data we need to make sure we have enough counts to have a good estimate of all the parameters.

\section{Conclusion}

We have developed an SDP hierarchy for calculating asymptotic key rates in quantum key distribution with characterised devices. The algorithm is efficient, easy to implement, and straightforward to use for different protocols. We have also shown how to adapt it to minimise the key rate over the confidence region compatible with the measured statistics, which is necessary to handle real experimental data. Our numerical results show that it closely recovers the known analytical key rates.

We would like to highlight the main advantages of our approach to that of \cite{winick2018,hu2022}. First, our method formulates the problem as an SDP, and thus can use standard solvers, while in \cite{winick2018,hu2022} the problem of calculating the key rate is cast as a general convex optimization problem, which requires custom-programmed algorithms. Also, we believe our method is easier to set up and modify for an arbitrary protocol -- the user only needs to be able to formulate the measurements used for the key generation and the conditional entropy estimation. On the other hand, the methods of \cite{winick2018,hu2022} require deeper understanding of the underlying method, including the definitions of certain linear maps needed to properly define the cost function of each considered protocol.

The fact our method casts the problem as an SDP allows one to obtain an analytical lower bound to the key rate, if desired. One solves the dual problem, approximates the numerical solution by rational numbers, and perturbs the rational solution to make it satisfy the SDP constraints exactly. This is done for example in Appendix C of \cite{Bavaresco2021}. In this way one gets around the finite tolerances of numerical SDP solvers and possible floating-point errors.

Directions for future research include adapting it to protocols that use different security assumptions, that were developed to overcome limitations of standard QKD. Examples are measurement-device-independent QKD \cite{Lo2012}, twin-field QKD \cite{Lucamarini2018}, and using decoy-states \cite{Hwang2003}.

Another important step is to compute the rate for finite keys, which is necessary for real implementations. This can be done using the Entropy Accumulation Theorem \cite{Dupuis2020} as done in \cite{George2022}.

\section{Code availability}

All the code and data necessary to reproduce the results of this paper are available in \url{https://github.com/araujoms/qkd}.

\section{Acknowledgements}
We would like to thank Valerio Scarani for the discussions that triggered this work, and Peter Brown and David Trillo for further valuable discussions, and Hayata Yamasaki for help with the calculations. Hayata Yamasaki acknowledges funding from JST PRESTO Grant Number JPMJPR201A. M.H. would like to acknowledge funding from the European Research Council (Consolidator grant `Cocoquest' 101043705). M.P. acknowledges funding and support from GAMU project MUNI/G/1596/2019 and VEGA Project No. 2/0136/19. M.A. and M.N acknowledge funding from the FWF stand-alone project P 35509-N. A.T acknowledges financial support from the Wenner-Gren foundation. 

\printbibliography
\appendix
\section{Strict feasibility and facial reduction}\label{sec:facialreduction}

\begin{theorem}
The SDP \eqref{eq:qkd_sdp} is strictly feasible if and only if there exists a full rank state $\sigma$ such that $\tr(F_k \sigma) = f_k$ for all $k$.
\end{theorem}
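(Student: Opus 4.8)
The plan is to prove the two directions separately, exploiting the fact that $\sigma$ appears as the top-left block of every positive-semidefiniteness constraint in SDP~\eqref{eq:qkd_sdp}, while the auxiliary variables $\zeta^a_i$, $\eta^a_i$, $\theta^a_i$ enter no linear (equality) constraint directly. Throughout I identify the operators $F_k$ of the statement with the POVM elements $E_k$ appearing in the equality constraints of SDP~\eqref{eq:qkd_sdp}.

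For the forward direction, I would start from a strictly feasible point, i.e.\ one for which $\Gamma^1_{a,i}\succ 0$ and $\Gamma^2_{a,i}\succ 0$ for all $a,i$. Since $\sigma$ is a principal submatrix of $\Gamma^1_{a,i}$ (its top-left block), and every principal submatrix of a positive-definite matrix is itself positive definite, it follows immediately that $\sigma\succ 0$. Because the equality constraints $\tr(\sigma)=1$ and $\tr(F_k\sigma)=f_k$ are part of feasibility, this $\sigma$ is precisely a full-rank state compatible with the measurement results, as required.

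For the converse, I would give an explicit construction. Given a full-rank compatible state $\sigma$, I take $\sigma$ itself as the first variable and choose $\zeta^a_i=0$, $\eta^a_i=\theta^a_i=\id$ for all $a,i$. The equality constraints then hold by assumption. With these choices both $\Gamma^1_{a,i}$ and $\Gamma^2_{a,i}$ become block diagonal, equal to $\operatorname{diag}(\sigma,\id)$, whose spectrum is the union of the eigenvalues of $\sigma$ — all strictly positive since $\sigma$ is full rank — and those of $\id$. Hence every positive-semidefiniteness constraint is strict and the point is strictly feasible.

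The step requiring the most care — though it is not genuinely hard — is recognising that $\eta^a_i$ and $\theta^a_i$ are unconstrained except through the block matrices, so they may be set to any positive-definite matrix without disturbing the linear constraints, which involve $\sigma$ alone. I would also note that the same $\zeta^a_i$ occurs in both $\Gamma^1_{a,i}$ and $\Gamma^2_{a,i}$, so the single choice $\zeta^a_i=0$ must be (and is) consistent across both blocks. Once these observations are made, both directions reduce to the elementary fact that $\sigma\succ 0$ is equivalent to the positive-definiteness of the enlarged block matrices, completing the characterisation.
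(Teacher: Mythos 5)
Your proposal is correct and takes essentially the same approach as the paper: your construction $\zeta^a_i=0$, $\eta^a_i=\theta^a_i=\id$ for the direction from a full-rank compatible state to strict feasibility is identical to the paper's. For the other direction the paper argues the contrapositive, showing explicitly that a zero eigenvector $\ket{v}$ of $\sigma$ forces $\Gamma^1_{a,i}\ket{0}\ket{v}=0$ for any feasible point, which is the same elementary fact as your direct observation that a principal submatrix of a positive-definite matrix is positive definite.
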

\begin{proof}
To see the if direction, assume that such a state exists. Then let $\zeta^a_i = 0$, and $\eta^a_i = \theta^a_i = \id$. The eigenvalues of $\Gamma^1_{a,i}$ and $\Gamma^2_{a,i}$ are then the eigenvalues of $\sigma$, which are by assumption strictly positive, together with the eigenvalues of $\id$.

To see the converse direction, let $\ket{v}$ be a eigenvector of $\sigma$ with eigenvalue 0. Then since $\Gamma^1_{a,i} = \proj{0}\otimes \sigma + \ketbra{0}{1} \otimes \zeta^a_i + \ketbra{1}{0} \otimes {\zeta^a_i}^\dagger + \proj{1} \otimes \eta^a_i$ we have that $\bra{0}\bra{v} \Gamma^1_{a,i} \ket{0}\ket{v} = 0$. Now if $\Gamma^1_{a,i} \ge 0$, it can be written as $G^\dagger G$ for some matrix $G$. Since $\bra{0}\bra{v} G^\dagger G \ket{0}\ket{v} = \norm{G \ket{0}\ket{v}}_2^2$, we have that $G \ket{0}\ket{v} = 0$, and therefore $\Gamma^1_{a,i} \ket{0}\ket{v} = 0$.
\end{proof}

In order to find out whether a full rank $\sigma$ compatible with the measurement results exist it is enough to maximise the minimum eigenvalue of $\sigma$, which is a simple SDP:
\begin{equation}
\begin{gathered}
\max_{\lambda,\sigma} \lambda \\
\text{s.t.}\quad \tr\de{\sigma} = 1, \quad \forall k\  \tr(F_k \sigma) = f_k, \quad \sigma -\lambda\id \ge 0.
\end{gathered}
\end{equation}
Note that this SDP is always strictly feasible (if a $\sigma$ compatible with the measurement results exists in the first place), because one can set $\lambda = -1$ and then all eigenvalues will be at least 1.

Suppose that $\sigma$ cannot be made full rank. Note that because of the constraints $\Gamma^1_{a,i} \ge 0$ and $\Gamma^2_{a,i}\ge 0$ the matrices $\zeta^a_i$ must have the same support as $\sigma$. Furthermore, the matrices $\theta^a_i$ and $\eta^a_i$ can be chosen to have the same support, as non-zero elements outside it can never decrease the value of the objective. Therefore, to make the SDP strictly feasible we can simply reformulate it explicitly inside the support of $\sigma$.

Let then $\{\ket{v_i}\}_{i=0}^{k-1}$ be an orthonormal basis for the support of $\sigma$, and $V = (\ket{v_0}, \ldots,\ket{v_{k-1}})$ an isometry. We have that $\sigma = V\omega V^\dagger$ for a $k\times k$ quantum state $\omega$, and similarly $\zeta^a_i = V\xi^a_i V^\dagger, \eta^a_i = V\mu^a_i V^\dagger$, and $\theta^a_i = V\nu^a_i V^\dagger$ for  $k\times k$ complex matrices $\xi^a_i,\mu^a_i,\nu^a_i$. Writing then SDP \eqref{eq:qkd_sdp} in terms of the new variables, we have
\begin{equation}\label{eq:reduced_sdp}
\begin{gathered}
\min_{\omega,\{\xi^a_i,\mu^a_i,\nu^a_i\}_{a,i}} c_m + \sum_{i=1}^m\sum_{a=0}^{n-1} \frac{w_i}{t_i \log 2} \tr\De{V^\dagger(A^a_0 \otimes \id_B)V\de{\xi^a_i + {\xi^a_i}^\dagger + (1-t_i)\mu^a_i} + t_i\nu^a_i} \\
\text{s.t.}\quad \tr\de{\omega} = 1,\quad \forall k\  \tr(V^\dagger F_k V \omega) = f_k\\
\forall a,i \quad \Gamma^1_{a,i} := \begin{pmatrix} 
\omega & \xi^a_i \\
{\xi^a_i}^\dagger & \mu^a_i
\end{pmatrix}  \ge 0,\quad \Gamma^2_{a,i} := \begin{pmatrix} 
\omega & {\xi^a_i}^\dagger \\
\xi^a_i & \nu^a_i
\end{pmatrix} \ge 0.
\end{gathered}
\end{equation}

\subsection{Example}

Suppose Alice and Bob share a state of local dimension 2, and are estimating the probability of getting equal outcomes in the $Z$ and $X$ bases. That is
\begin{gather}
F_0 = \proj{0} + \proj{1}\\
F_1 = \proj{+} + \proj{-}.
\end{gather}
Suppose then that the probabilities they estimate\footnote{This is of course not a realistic scenario. Even in the very unlikely case where they always get equal outcomes in the $Z$ basis, they would be fools to estimate the probability to be 1.} are $f_0 = 1$ and $f_1 = x$, for some $x \in (0,1)$. Then necessarily we have that $\sigma\ket{01} = \sigma\ket{10} = 0$, so $\sigma$ is supported in the 2-dimensional subspace spanned by $\ket{\phi^+}$ and $\ket{\phi^-}$. The isometry is then $V = (\ket{\phi^+},\ket{\phi^-})$, and with it we can solve the reduced SDP \eqref{eq:reduced_sdp}. It turns out that in this case facial reduction made no practical difference: the numerical solver could also handle this problem without the reduction, the only difference was that the reduced problem was much faster.

Suppose now that $x=1$, so the only possible solution is $\sigma = \proj{\phi^+}$, and the isometry is $V = \ket{\phi^+}$. In this case the solver cannot handle the original problem, it suffers from numerical instabilities and gives out an incorrect answer. The reduced problem was solved correctly without difficulties.

\section{Symmetrisation}\label{sec:symmetrisation}

In this appendix we explore in more detail the symmetrisation techniques mentioned in Section \ref{sec:mainsdp}.

\subsection{Complex conjugation}\label{sec:symmetry_real}

The simplest symmetrisation applies when $A^a_0$ and $E_k$ are real matrices for all $a,k$. Then the transformation
\begin{subequations}\label{eq:symmetry_real}
\begin{gather}
\sigma \mapsto \sigma^*, \\
\zeta^a_i \mapsto {\zeta^a_i}^*, \\
\eta^a_i \mapsto {\eta^a_i}^*,\\
\theta^a_i \mapsto {\theta^a_i}^*,
\end{gather}
\end{subequations}
will leave the objective invariant, as
\begin{multline}
c_m + \sum_{i=1}^m\sum_{a=0}^{n-1} \frac{w_i}{t_i \log 2} \tr\De{(A^a_0 \otimes \id_B)\de{{\zeta^a_i}^* + {{\zeta^a_i}^*}^\dagger + (1-t_i){\eta^a_i}^*} + t_i{\theta^a_i}^*} \\
= c_m + \sum_{i=1}^m\sum_{a=0}^{n-1} \frac{w_i}{t_i \log 2} \tr\De{({A^a_0}^T \otimes \id_B)\de{\zeta^a_i + {\zeta^a_i}^\dagger + (1-t_i)\eta^a_i} + t_i\theta^a_i},
\end{multline}
and map satisfied constraints to satisfied constraints, as $\tr(\sigma^*) = \tr(\sigma)$, $\tr(E_k \sigma^*) = \tr(E_k^T \sigma)$, and ${\Gamma^g_{a,i}}^* \ge 0$ iff ${\Gamma^g_{a,i}} \ge 0$.

This implies that whenever $\sigma,\{\zeta^a_i,\eta^a_i,\theta^a_i\}_{a,i}$ is feasible solution to the SDP, $\sigma^*,\{{\zeta^a_i}^*,{\eta^a_i}^*,{\theta^a_i}^*\}_{a,i}$ will also be a feasible solution with the same value, and the same will be true for any convex combination of these two sets. Using then equally weighted convex combination, we see that we can choose all of them to be real.

Even if the $E_k$ are not real, this symmetrisation also applies if for all $k$ there exists a $k'$ such that $E_k^T = E_{k'}$ and $f_k = f_{k'}$. This holds for example for the MUB protocol from Section \ref{sec:mubprotocol} with the probabilities of the isotropic state. It will certainly not hold for real experimental data (or even simulated data), so we refrained from using this in our calculations.

\subsection{Permutation}\label{sec:permutation}

Assume now that $A^a_0 = \proj{a}$, that is, we are using the computational basis as the key basis. Then for any permutation $\pi$ and any unitary $V$ the objective will be invariant under the mapping
\begin{subequations}\label{eq:symmetry}
\begin{gather}
\zeta^a_i \mapsto (U_\pi^\dagger \otimes V^\dagger) \zeta^{\pi(a)}_i (U_\pi \otimes V), \\
\eta^a_i \mapsto (U_\pi^\dagger \otimes V^\dagger) \eta^{\pi(a)}_i (U_\pi \otimes V), \\
\theta^a_i \mapsto (U_\pi^\dagger \otimes V^\dagger) \theta^{\pi(a)}_i (U_\pi \otimes V), \\
\sigma \mapsto (U_\pi^\dagger \otimes V^\dagger) \sigma (U_\pi \otimes V),
\end{gather}
\end{subequations}
where $U_\pi$ is the permutation matrix such that $U_\pi\ket{a} = \ket{\pi(a)}$ for all $a$.

This mapping also doesn't affect the satisfiability of the constraints of the SDP, except for the constraint $\tr(E_k \sigma) = f_k$, which is mapped onto $\tr\De{E_k (U_\pi^\dagger \otimes V^\dagger) \sigma (U_\pi \otimes V)} = f_k$. Therefore, the mapping \eqref{eq:symmetry} is a symmetry of the SDP if
\begin{equation}\label{eq:measurement_symmetry}
E_k = (U_\pi \otimes V) E_k (U_\pi^\dagger \otimes V^\dagger) \quad \forall k.
\end{equation}
On the flip side, whenever our measurement bases satisfy such a permutation symmetry, we can use that symmetry to simplify the SDP.

Assuming that this holds for some $\pi$ and $V$, let $S_\pi$ be a set of representatives for the orbits of $\pi$, and $K_\pi(a)$ the size of the orbit of representative $a$. For example, if $\pi$ is such that $\pi(0) = 2, \pi(1)=1$, and $\pi(2) = 0$, then $S_\pi = \{0,1\}, K_\pi(0) = 2$, and $K_\pi(1) = 1$. Then SDP \eqref{eq:qkd_sdp} simplifies to 
\begin{equation}\label{eq:qkd_sdp_symmetrized}
\begin{gathered}
\min_{\sigma,\{\zeta^a_i,\eta^a_i,\theta^a_i\}_{a,i}} c_m + \sum_{i=1}^m\sum_{a \in S_\pi} K_\pi(a) \frac{w_i}{t_i \log 2} \tr\De{(A^a_0 \otimes \id_B)\de{\zeta^a_i + {\zeta^a_i}^\dagger + (1-t_i)\eta^a_i} + t_i\theta^a_i} \\
\text{s.t.}\quad \tr\de{\sigma} = 1,\quad \sigma = (U_\pi^\dagger \otimes V^\dagger) \sigma (U_\pi \otimes V), \quad \forall k\  \tr(E_k \sigma) = f_k\\
\forall a \in S_\pi,\forall i \quad \Gamma^1_{a,i} := \begin{pmatrix} 
\sigma & \zeta^a_i \\
{\zeta^a_i}^\dagger & \eta^a_i
\end{pmatrix}  \ge 0,\quad \Gamma^2_{a,i} := \begin{pmatrix} 
\sigma & {\zeta^a_i}^\dagger \\
\zeta^a_i & \theta^a_i
\end{pmatrix} \ge 0.
\end{gathered}
\end{equation}

This symmetrisation applies for example to the original MUB protocol from Ref.~\cite{sheridan2010}, without generalisations we introduced in Section \ref{sec:mubprotocol}. In the original protocol they measure only the projectors
\begin{equation}
E_{k,l} := \sum_{i=0}^{d-1} \Pi^i_k \otimes {\Pi^{i\oplus l}_k}^T
\end{equation}
for prime $d$, with $l$ going from 0 to $d-2$.

In this case the MUBs are given by the eigenvectors of the Weyl operators, and it is easy to see that Equation \eqref{eq:measurement_symmetry} is satisfied with $V=U_\pi$ for the cyclic permutation $\pi(a) = a\oplus 1$, where addition is modulo $d$. Then $S_\pi = \{0\}$, which has an orbit of size $d$. This provides a dramatic increase in performance, cutting running time and memory usage by roughly a factor of $d$.

This symmetrisation also applies for the overlap protocol from Section \ref{sec:mubprotocol} if we modify it to measure only the probabilities that the outcomes are equal. Then Equation \eqref{eq:measurement_symmetry} is satisfied with $V=U_\pi$ for the permutation $\pi(a) = d-1-a$. Then $S_\pi = \{0, \ldots, d/2-1\}$, and all orbits have size two.

\subsection{Lower bound}

This is not a symmetrisation technique, but it is worth noting that one can reduce the memory usage of SDP \eqref{eq:qkd_sdp} by using the following lower bound, analogous to the one used in Ref.~\cite{brown2021}:
\begin{multline}\label{eq:lowerbound}
\min_{\sigma,\{\zeta^a_i,\eta^a_i,\theta^a_i\}_{a,i}} c_m + \sum_{i=1}^m\sum_{a=0}^{n-1} \frac{w_i}{t_i \log 2} \tr\De{(A^a_0 \otimes \id_B)\de{\zeta^a_i + {\zeta^a_i}^\dagger + (1-t_i)\eta^a_i} + t_i\theta^a_i} \\
\ge c_m + \sum_{i=1}^m \frac{w_i}{t_i \log 2} \min_{\sigma,\{\zeta^a_i,\eta^a_i,\theta^a_i\}_{a}} \sum_{a=0}^{n-1} \tr\De{(A^a_0 \otimes \id_B)\de{\zeta^a_i + {\zeta^a_i}^\dagger + (1-t_i)\eta^a_i} + t_i\theta^a_i}.
\end{multline}
While the original SDP requires $m n$ variables $\{\zeta^a_i, \eta^a_i, \theta^a_i\}_{a,i}$, the lower bound breaks it down into $m$ different SDPs with only $n$ variables $\{\zeta^a_i, \eta^a_i, \theta^a_i\}_a$ each. The lower bound is in general not tight, and the running time of the $m$ SDPs is often longer than the running time of the original SDP, so it should only be used on systems with insufficient memory. We haven't used this lower bound in any of the calculations in this paper.

\section{Determining the confidence region}\label{sec:montecarlo}

The idea behind Bayesian parameter estimation is simple: we wish to estimate some parameters $\bm{\theta}$ based on some data $D$, given a prior distribution over the parameters $p(\bm{\theta})$. The posterior distribution of the parameters is therefore given by
\[p(\bm{\theta}|D) = \frac{p(D|\bm{\theta})p(\bm\theta)}{\int \mathrm{d}\bm\theta' p(D|\bm{\theta'})p(\bm\theta')},\]
the resulting estimate for the parameters is the expected value
\begin{equation}\label{eq:expectedvalue}
\hat{\bm\theta} := \int \mathrm{d}\bm\theta p(\bm{\theta}|D)\bm\theta,
\end{equation}
and the $1-\alpha$ credible region (Bayesian version of confidence region) is defined as the high density posterior, that is, the set
\begin{equation}
S_\gamma := \{\bm\theta\; ; \;p(\bm{\theta}|D) \ge \gamma\},
\end{equation}
where $\gamma$ is defined as the supremum of $\gamma'$ such that
\begin{equation}\label{eq:credible_region}
\int_{S_{\gamma'}}\mathrm{d}\bm\theta p(\bm{\theta}|D) \ge 1-\alpha.
\end{equation}
The required integrals are in general very difficult to compute. We propose here to compute them by approximating the likelihood function $p(D|\bm{\theta})$ by a Gaussian, restricting the prior $p(\bm\theta)$ to be either flat or Gaussian, and then applying Monte Carlo methods.

The reason for doing so is that the product of two Gaussians is again a Gaussian, and so the posterior distribution will be the intersection of a Gaussian with the space of valid $\bm\theta$, call it $\Theta$ (in our case the set of probabilities that can be produced by a quantum state). This makes the credible region  easy to describe; if the Gaussian of the posterior has mean $\bm\mu$ and covariance matrix $\Sigma$, that is, if 
\begin{equation}
G(\bm\theta) = \frac1{\sqrt{\det(2\pi\Sigma)}}\exp\de{-\frac12\left\langle \bm\theta-\bm{\mu}, \Sigma^{-1} (\bm\theta-\bm{\mu})\right\rangle},
\end{equation}
then the credible region will be the set of $\bm\theta$ such that $\bm\theta \in \Theta$ and
\begin{equation}\label{eq:credible_region_chi}
\left\langle \bm\theta-\bm{\mu}, \Sigma^{-1} (\bm\theta-\bm{\mu})\right\rangle \le \chi^2
\end{equation}
for some $\chi$. The main advantage of doing so is that inequality \eqref{eq:credible_region_chi} is an SDP constraint, so if the condition $\bm\theta \in \Theta$ also is, we can bound quantities of interest maximising or minimising them over the credible region.

To compute the credible region we then start with a initial guess $\chi_0 = Q^{-1}(\kappa/2,0,1-\alpha)$, where $Q$ is the regularised incomplete gamma function and $\kappa$ the number of parameters we are estimating. We then compute integral \eqref{eq:credible_region} via Monte Carlo: we sample $n$ times a $\bm\theta$ from $p(\bm{\theta}|D)$ and count how many times $k$ it respects inequality \eqref{eq:credible_region_chi}. Our estimate of the integral is then $k/n$. We then increase or decrease $\chi_0$ until we our estimate is close enough to $1-\alpha$, via binary search.

To sample $\bm\theta$ from $p(\bm{\theta}|D)$, there are two methods: the easiest one is rejection sampling: sample $\bm\theta$ from $G(\bm\theta)$, which is very easy to do, and check whether $\bm\theta \in \Theta$. If yes, accept the sample, otherwise reject. It might be the case that the probability of getting a sample $\bm\theta \in \Theta$ is too low for this method to be viable, usually when $\bm\mu \not\in \Theta$. In this case we need to sample $\bm\theta$ with the Metropolis algorithm.

Up to this point, the discussion of Bayesian parameter estimation has been quite generic. In order to apply it to our problem we need to specify the likelihood function and the Gaussian that approximates it. 

If one assumes the random variables are independent and identically distributed (i.i.d.), the likelihood function is given simply by the multinomial distribution. Let's say one performed a set of $n$ measurements with $k_i$ outcomes each, obtaining counts $\mathbf{N} := \{N^w_i\}_{w=0,i=0}^{k_i-1,n-1}$ with probabilities $\mathbf{p}:=\{p(w|i)\}_{w=0,i=0}^{k_i-1,n-1}$. The likelihood function is then
\begin{equation}
L(\mathbf{p}) := \exp \langle \mathbf{N},\log(\mathbf{p})\rangle,
\end{equation}
where $\langle \cdot,\cdot \rangle$ denotes the inner product. We approximate it by the Gaussian with multinomial mean and covariances.

The mean $\bm{\mu} := \{\mu(w|i)\}_{w=0,i=0}^{k_i-2,n-1}$ is given by
\begin{equation}
\mu(w|i) = \frac{N^w_i}{N_i},
\end{equation}
where we are omitting the relative frequency of the last outcome in order to avoid degeneracy, and $N_i := \sum_{w'=0}^{k_i-1}N^{w'}_i$.

The covariance matrix $\Sigma$ is given by 
\begin{equation}
\Sigma = \bigoplus_{i=0}^{n-1} (\operatorname{diag}(\bm{\mu}_i) - \bm{\mu}_i \bm{\mu}_i^T)/N_i,
\end{equation}
where $\bm{\mu}_i := \{\mu(w|i)\}_{w=0}^{k_i-2}$. If any of the counts $N^w_i$ is equal to zero, this will lead to a singular covariance matrix, which can not be used in the Gaussian. A simple trick to get around this is to change this count to one for the calculation of the covariance matrix. This makes it no longer singular, and still provides a decent approximation to the multinomial distribution, provided the number of counts is high enough for the Gaussian to be a sensible approximation.

Without the i.i.d. assumption, the covariance matrix is no longer a simple function of the mean, but needs to be estimated separately.

\end{document}